\theoremstyle{definition}
\newtheorem*{claim*}{Claim}
\newcommand{\Z}{\mathbb{Z}}
\newcommand{\N}{\mathbb{N}}
\newcommand{\C}{\mathbb{C}}
\newcommand{\R}{\mathbb{R}}
\newcommand\defeq{\mathrel{\stackrel{\makebox[0pt]{\mbox{\normalfont\tiny def}}}{=}}}
\newcommand{\refeq}[1]{Eq.~\eqref{#1}}
\newcommand{\refsec}[1]{Sec.~\ref{#1}}
\newcommand{\reffig}[1]{Fig.~\ref{#1}}
\newcommand{\Conf}{\operatorname{Conf}}
\newcommand{\PB}{\operatorname{PB}}
\begin{document}
\newcommand*{\PITT}{Department of Physics and Astronomy, University of Pittsburgh, Pittsburgh, Pennsylvania 15260, United States}\affiliation{\PITT}
\newcommand*{\PQI}{Pittsburgh Quantum Institute, Pittsburgh, Pennsylvania 15260, United States}
\newcommand*{\PI}{Perimeter Institute for Theoretical Physics, Waterloo, Ontario N2L 2Y5, Canada}
\title{Topological classification of non-Hermitian bands}

\author{Zhi Li}\affiliation{\PITT}\affiliation{\PQI}\affiliation{\PI}
\author{Roger S. K. Mong}\affiliation{\PITT}\affiliation{\PQI}

\begin{abstract}
We proposed a framework for the topological classification of non-Hermitian systems. Different from previous $K$-theoretical approaches, our approach is a homotopy classification, which enables us to see more topological invariants. Specifically, we considered the classification of non-Hermitian systems with separable band structures. We found that the whole classification set is decomposed into several sectors based on the \textbf{braiding} of energy levels and characterized by some \textbf{braid group} data. Each sector can be further classified based on the topology of eigenstates (wave functions). Due to the interplay between energy levels braiding and eigenstates topology, we found some torsion invariants, which only appear in the non-Hermitian world via homotopical approach. We further proved that these new topological invariants are \textbf{unstable} (fragile), in the sense that adding more bands will trivialize these invariants.

\end{abstract}
\maketitle

\section{Introduction}

While we are used to assuming the Hermiticity of Hamiltonians, as required by the axioms of quantum mechanics, there has been growing interest in non-Hermitian Hamiltonians these years.  Indeed, in the Hermitian quantum mechanics framework, non-Hermitian Hamiltonian can emerge as an effective description of open systems with gain and loss \cite{moiseyev_2011,PhysRevLett.70.2273,Rotter_2009,Diehl2011,PhysRevB.84.205128_2011,Regensburger2012,Zhen2015,Feng2017,RevModPhys.87.61,El-Ganainy2018,PhysRevLett.100.103904,PhysRevLett.101.080402,PhysRevLett.103.093902,PhysRevLett.103.123601,PhysRevLett.104.153601,PhysRevLett.106.213901,PhysRevLett.108.024101,PhysRevLett.108.173901,PhysRevLett.115.200402,PhysRevLett.116.133903_2016,PhysRevLett.118.040401} or systems with finite-lifetime quasiparticles or non-Hermitian self energy \cite{2017arXiv170805841K,PhysRevB.98.035141,PhysRevB.97.041203}, which can be experimentally realized in atomic or optical systems \cite{PhysRevLett.86.787,Ruter2010,Feng2012,Chang2014,Hodaei975,Gao2015,Hodaei2017}. Moreover, non-Hermitian Hamiltonians with certain properties can serve as an extension of conventional Hermitian quantum mechanics \cite{PhysRevLett.80.5243,PhysRevLett.89.270401,makesenseofNH}.

Classification of topological phases of matter has been one of the central problems in condensed matter physics for the last two decades. While a complete classification of topological phases is still in progress, the classification for gapped non-interacting fermions is well-established \cite{AZ,Shinsei,KitaevK,Ryu_2010} based on the geometry and topology of the band structures.

Inspired by the great success in topological phases for Hermitian systems, there have been lots of works focusing on the topological aspects of non-Hermitian systems \cite{Gong_2018,symtop_2018,PhysRevB.99.125103,PhysRevLett.123.066405_2019,PhysRevB.99.235112,Ghatak_2019,sun2019nontrivial}. On the one hand, many familiar constructions for topological phases can be extended in the case of non-Hermiticity. For example, people have constructed the non-Hermitian counterparts for the Su-Schrieffer-Heeger model \cite{PhysRevB.84.205128_2011,PhysRevA.89.062102_2014,PhysRevLett.116.133903_2016,PhysRevB.97.045106_2018,PhysRevA.97.052115_2018,2019arXiv190604700J}, Chern insulators \cite{PhysRevLett.121.136802,PhysRevB.98.155430_2018,PhysRevB.98.245130_2018,Fu,PhysRevB.98.165148_2018}, and quantum spin Hall effects \cite{nonHQSHE}. On the other hand,  non-Hermitian systems also exhibit many unusual phenomena with no counterpart in the Hermitian world. These include exceptional points \cite{Keldysh_1971,Berry2004,Heiss_2012}, anomalous bulk-edge correspondence \cite{Xiong_2018,PhysRevLett.116.133903_2016,PhysRevB.97.121401_2018,PhysRevLett.121.026808_2018,Gil_2019}, non-Hermitian skin effect \cite{PhysRevLett.121.086803_2018,PhysRevLett.121.136802}, and sensitivity to boundary conditions \cite{Xiong_2018,PhysRevB.99.201103_2019}. For a recent review, see Ref.~\onlinecite{MartinezAlvarez2018} and references therein.

There have been some works \cite{Gong_2018,symtop_2018,PhysRevB.99.125103,PhysRevLett.123.066405_2019,PhysRevB.99.235112} on the general classification of non-Hermitian systems, aiming at a generalization of the Hermitian periodicity table \cite{Shinsei,KitaevK}. In these works, the authors first determined reasonable symmetry classes in the non-Hermitian setting (a generalization of Ref.~\onlinecite{AZ}), then used a unitarization/Hermitianization map to reduce the problem into the Hermitian setting where one can apply $K$-theory.

In this article, we proposed a more conceptually straightforward homotopical~\cite{Simon,JoelMoore,Hopfinsulator,Roy,Roger,PhysRevB.98.094432} framework towards the topological classification of non-Hermitian band structures, which enables us to see more topological invariants beyond $K$-theoretical approaches. With rigorous algebraic-topological calculation, we implemented our idea in detail for systems with no symmetry.

We found that, due to the non-Hermiticity of the Hamiltonian, energy levels can be complex and therefore braid with each other in the complex plane, which decomposes the whole classification set into several \textbf{braiding sectors}. Each sector can be further classified based on the topology of eigenstates (wave functions), akin to the usual topological classification for Hermitian systems, but with more subtleties coming from the braiding of energy levels and the shape of the Brillouin zone (torus vs. sphere). We found some new torsion invariants (for example, $\Z_2$), and a physical explanation of these new invariants is given.

We also considered the stability of these new invariants, in the sense that whether adding other bands will trivialize these invariants, even if the band has no crossing with previous bands. Similar to the $\Z$ invariants of Hopf insulators \cite{Hopfinsulator}, our torsion invariants are unstable. We managed to give a combinatorial proof for instability in general. The physical origin of the instability is also discussed.

This article is organized as follows. In \refsec{sec-setting}, we discuss our classification principle: what kind of systems we are looking at, and what we mean by two systems are in the same class. In \refsec{sec-classification}, the classification principle is implemented mathematically, and some examples are discussed in \refsec{sec-example}. Finally, we investigate the instability in \refsec{sec-instability}.

\section{Principle of classification}\label{sec-setting}
A classification problem, formally speaking, is to classify elements of a set according to some equivalence relations. In many problems of condensed matter physics, the set is usually taken to be the set of Hamiltonians $H$ with an ``energy gap,'' while $H_1$ and $H_2$ are equivalent if and only if they can be continuously connected while keeping the gap open.

For Hermitian systems, there is no subtlety regarding the meaning of the gap, since all eigenvalues of a Hermitian Hamiltonian are real and the meaning of a gap on the real line is clear. For non-Hermitian systems (interacting or not), however, the eigenvalues can be complex. Therefore, the meaning of a ``gap'' needs to be further clarified \cite{Fu,symtop_2018,10.1088/2515-7639/ab4092}. 

Consider a non-interacting non-Hermitian system with translational invariance. Standard second quantization and band theory give rise to momentum-dependent one-body Hamiltonians $H(k)$. In this article, we will call $\{H(k)\}~(k\in\text{BZ, the Brillouin zone})$ a band structure, which contains information of both their spectrum $E_i(k)$ and associated eigenstates $\ket{\psi_i(k)}$. 

One has at least the following different notions of the gap:
\begin{itemize}
    \item \textbf{Line gap} \cite{symtop_2018,2019arXiv190207217B}. There exists a (maybe curved) line $l$ in the complex energy plane which separates the plane into two disconnected pieces. We require $E_i(k)\notin l$ for all $i$ and $k$, and both connected components have some spectral points in them.
    \item \textbf{Point gap} \cite{Gong_2018,symtop_2018,PhysRevB.99.125103,PhysRevB.99.235112}. $E_i(k)\neq 0$ for all $i$ and $k$. Here, 0 is a reference point which can be altered to any $E_0$.
    \item \textbf{Separable band} \cite{Fu,PhysRevB.98.155430_2018}. A specific band $E_i(k)$ is called separable if $E_j(k)\neq E_i(k)$ for all $j\neq i$ and $k$.
    \item \textbf{Isolated band} \cite{Fu}. A specific band $E_i(k)$ is called isolated if $E_j(k')\neq E_i(k)$ for all $j\neq i$ and $k,k'$.
\end{itemize}
Note that these notions are not mutually exclusive. For example, an isolated band is always separable; systems with isolated bands always have line gaps and hence always have point gaps. Also note that, the first two notions are applicable to general non-Hermitian systems, while the last two notions are specific to translational-invariant non-interacting cases by definition.

In our article, we will consider the classification of separable band structures, since other cases were solved \cite{Gong_2018,symtop_2018,PhysRevB.99.125103,PhysRevLett.123.066405_2019,PhysRevB.99.235112} by mapping back to the Hermitian case. However, there is one more problem with the definition of separability that needs to be discussed: the above-mentioned $E_i(k)$ may not be a well-defined function of $k$. 

For example, consider a one-dimensional systems with two bands, satisfying $E_1(k)\neq E_2(k)$ for all $k$. It is possible that $E_1(2\pi)=E_2(0)$: if one follows the spectrum when $k$ goes around the  Brillouin zone (a circle in this case) starting from $E_1(0)$, one may go to $E_2(0)$ instead of going back to $E_1(0)$; see \reffig{pic-z2braiding}. In this case, the notation ``$E_i(k)$'' (and therefore its separability) for a specific $i$ may not be well defined. Instead, it is better to define separability in a global manner: for any $k$, $E_i(k)~(i=1,\cdots,n)$ are all different. This definition of separability automatically rules out exceptional points, i.e., $H(k)$ is not diagonalizable under similarity transformations, since it requires (algebraically) degenerated spectra.
\begin{figure}
    \centering
    \includegraphics[width=0.9\columnwidth]{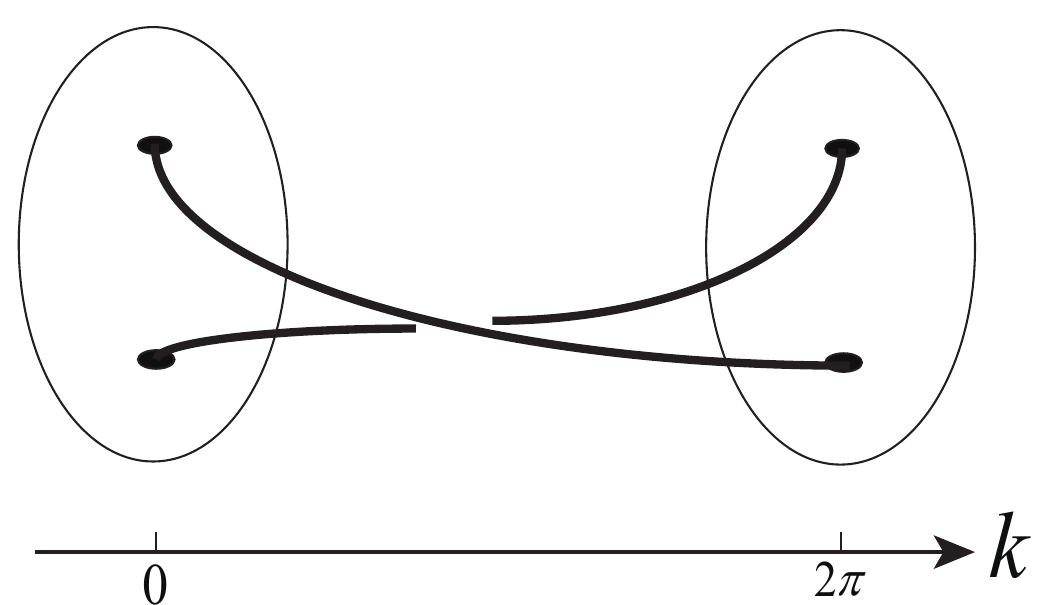}
    \caption{$\Z_2$ braiding of energy levels. In this figure, the disk is the complex energy plane, with two spectra points in it; $k$ is the Bloch momentum, $k=0$ and $k=2\pi$ should be identified. }\label{pic-z2braiding}
\end{figure}

To summarize, we will consider the following problem: classify the band structure $\{H(k)\}$ where spectra of $H(k)$ are non degenerated and $\{H_0(k)\}$ and $\{H_1(k)\}$ are equivalent if and only if they can be continuously connected  by $\{H_t(k)\}$ for $t\in[0,1]$ and the spectra of $H_t(k)$ for any $t$ and $k$ are always non-degenerated.

\section{classification}\label{sec-classification}

Let us consider the general problem of classifying band structures with $n$ bands on an $m$-dimensional lattice. Denote 
\begin{equation}
    X_n=\text{the space of~} H(k).
\end{equation}
Namely, it is the space of $n\times n$ matrices with non-degenerated spectrum. Here the Brillouin zone will be the $m$-dimensional torus $T^m$. Mathematically speaking, we want to find the homotopy equivalent classes of non-based maps from the Brillouin zone $T^m$ to $X_n$, denoted by $[T^m,X_n]$. 

It will be important to distinguish $T^m$ and $S^m$, since they will give different answers. It is also important to distinguish based maps and non-based maps: the former require a chosen point in $T^m$ to be mapped to a chosen point in $X_n$ while the latter have no such requirement\footnote{On the other hand, for continuous systems the appropriate choice the based map from $S^m$ to $X_n$, since the Brillouin zone here is $S^n$ with the requirement that infinite momentum maps to some fixed point \cite{KitaevK}.}.

To calculate the classification, we are going to use some standard methods in algebraic topology. For an introduction, see Ref.~\onlinecite{Hatcher}.

\subsection{The space $X_n$ and its homotopy groups}
An element $H$ of $X_n$ is an $n\times n$ matrix with non-degenerated spectrum, which can be represented as $(\lambda_1,\cdots,\lambda_n,\alpha_1,\cdots,\alpha_n)$. Here, $(\lambda_1,\cdots,\lambda_n)$ are ordered eigenvalues satisfying $\lambda_i\neq\lambda_j$, i.e., 
\begin{equation}
(\lambda_1,\cdots,\lambda_n)\in\Conf_n(\mathbb{C}),
\end{equation}
where $\Conf_n(\mathbb{C})$ is the configuration space of \textbf{ordered} $n$-tuples in $\mathbb{C}$. $(\alpha_1,\cdots,\alpha_n)$ are corresponding eigenvectors (up to complex scalar multiplications), which are linearly independent. Denote the space of linearly independent ordered $n$-vectors (up to scalar) in $\mathbb{C}^n$ as $F_n$. We have
\begin{equation}\label{eq-Fn}
F_n    \cong GL(n)/\mathbb{C}^{*n},
\end{equation}
since $GL(n)$ acts transitively on $F_n$ and the stabilizer group is $\mathbb{C}^{*n}$, where $\mathbb{C}^{*}=\mathbb{C}-\{0\}$, the group of nonzero complex numbers. Another way to understand this equation is to consider the columns of a $GL(n)$ matrix, which are ordered  $n$-vectors in $\mathbb{C}^n$, while ``up to scalar'' is taken care of by $n$ independent scalar multiplications $\mathbb{C}^{*n}$. The space $F_n$ is actually homotopic to the full flag manifold (the space of subspace sequences) of $\mathbb{C}^n$.

This representation has some redundancies: one can permute $(\lambda_i,\alpha_i)$ and get the same matrix $H$. Therefore,
\begin{equation}\label{eq-Xn}
X_n\cong (\Conf_n\times F_n)/S_n,
\end{equation}
where $S_n$ is the permutation group acting on $\Conf_n\times F_n$ as simultaneous permutations of $(\lambda_i,\alpha_i)$.

Consider $\pi_1(\Conf_n)$, whose elements are (equivalence classes of) paths in $\Conf_n$, which correspond to some pure braidings of $n$ mutually different points in $\mathbb{C}$. Here, ``pure'' means each point goes back to itself after the braiding. This is true since we are considering ordered $n$-tuples. Therefore,
\begin{equation}
\pi_1(\Conf_n)=\PB_n,
\end{equation}
where $\PB_n$ is the pure braid group (no permutation) of $n$ points \cite{intro_braid,Kauffman}. 

It turns out \cite{Fadell_Neuwirth_1962} that $\Conf_n=K(PB_n,1)$, the classifying space of the group $PB_n$. Therefore,
\begin{equation}\label{eq-EMspace}
\pi_m(\Conf_n)=0, ~~ m\geq 2.
\end{equation}

%Here, we show the calculation for $m=2$, since it will provide a representation of $\pi_2(F_n)$ that is useful later.
The homotopy groups $\pi_m(F_n)$ can be obtained by the long exact sequence of homotopy groups \cite{Hatcher}, based on the fibration \refeq{eq-Fn}. For $m=1$, we have:
\begin{equation}
\pi_1(\mathbb{C}^{*n})\to\pi_1(GL(n))\to\pi_1(F_n)\to \pi_0(\mathbb{C}^{*n})=0.
\end{equation}
Here, $\pi_1(\mathbb{C}^{*n})=\Z^n$, $\pi_1(GL(n))=\Z$, which is essentially the determinant. The map $\pi_1(\mathbb{C}^{*n})\to\pi_1(GL(n))$ is exactly summing over $n$ components in $\Z^n$, which is surjective. Therefore $\pi_1(F_n)=0$. For $m=2$, we have:
\begin{equation}\label{eq-seqm2}
0=\pi_2(GL(n))\to\pi_2(F_n)\xrightarrow{\partial}\pi_1(\mathbb{C}^{*n})\to\pi_1(GL(n)).
\end{equation}
Therefore, $\pi_2(F_n)$, as the kernel, is represented by $n$ integers with summation equal to 0:
\begin{equation}\label{eq-kernelrep}
    \{(t_1,\cdots,t_n)\in\Z^n|\sum t_i=0\},
\end{equation}
which is isomorphic to $\mathbb{Z}^{n-1}$. This representation with $n$ integers will be useful later. For $m\geq3$, we have
\begin{equation}
0=\pi_m(\mathbb{C}^{*n})\to\pi_m(GL(n))\to\pi_m(F_n)\to \pi_{m-1}(\mathbb{C}^{*n})=0,
\end{equation}
therefore $\pi_m(F_n)=\pi_m(GL(n))=\pi_m(U(n))$.

To summarize, the result is as follows:
\begin{equation}\label{eq-piF}
\pi_m(F_n)=\begin{cases}
0,& m=1,\\
\mathbb{Z}^{n-1},& m=2,\\
\pi_m(U(n)),& m\geq 3.\\
\end{cases}
\end{equation}

Now consider the space $X_n$. According to \refeq{eq-Xn} and the fact that $S_n$ is discrete, higher homotopy groups $\pi_{m\geq 2}(X_n)$ are the same as those of $\Conf_n\times F_n$, therefore the same as \refeq{eq-piF}, due to \refeq{eq-EMspace}. 

For the fundamental group $\pi_1$, one can take advantage of the fact that $\pi_1(F_n)=0$ and show that:
\begin{equation}
\pi_1(X_n)=\pi_1(\Conf_n/S_n)=B_n.
\end{equation}
Here $S_n$ acts on $\Conf_n$ by permutations, giving the configuration space $\Conf_n/S_n$ of \textbf{non ordered} $n$-tuples in $\mathbb{C}$, whose fundamental group is $B_n$, the braid group including ``non pure'' braidings.

This is because a loop in $X_n$ corresponds to a path $p(t)=(p_1(t),p_2(t))$ in $\Conf_n\times F_n$ such that $p_1(1)=gp_1(0)$ and $p_2(1)=gp_2(0)$ for the same $g\in S_n$. Note that $g$ is uniquely determined by $p_1(1)$ [the initial point $(p_1(0),p_2(0))$ is a fixed lifting] and $F_n$ is simply connected, the path one-to-one (homotopically) corresponds to a path in $\Conf_n$ with  $p_1(1)=gp_1(0)$ and therefore a loop in $\Conf_n/S_n$. A more algebraic proof is to note that the actions of $S_n$ on $\Conf_n\times F_n$ and $\Conf_n$ are consistent, which gives the pullback
\begin{equation}
\begin{tikzcd}
\Conf_n\times F_n \arrow[r] \arrow[d] & \Conf_n \arrow[d] \\
X_n \arrow[r]& \Conf_n/S_n
\end{tikzcd},
\end{equation}
and then apply the homotopy exact sequence for this pullback square.

The appearance of the braid group $B_n$ is easy to understand. Consider a one-dimensional band structure and follow the evolution of spectrum $\{E_i\}$ along the Brillouin zone circle. Similar to $n=2$ case in \refsec{sec-setting} as shown in \reffig{pic-z2braiding}, in general points in $\{E_i\}$ will braid with each other during this evolution and may become other points after one cycle. The evolution of $n$ disjoint points is topologically classified by the braid group $B_n$.

\subsection{The set $[T^m,X_n]$} \label{sec-topology}

The equivalent class $[T^m,X_n]$ is related but may not be equal to the homotopy group $\pi_m$, which is, by definition, $\braket{S^m,X_n}$. Here, $\braket{-,-}$ is used for based maps, while $[-,-]$ is used for non-based maps. In general, $[T,X]$ is just a set with no extra structures, even if $T$ is a sphere, in which case $\braket{T,X}$ is exactly a homotopy group. The relation between $[T,X]$ and $\braket{T,X}$ for general spaces $T$ and $X$ is as follows \cite{Hatcher}: There is a right action of $\pi_1(X)$ on $\braket{T,X}$, and $[T,X]\cong\braket{T,X}/\pi_1(X)$, the orbit set of the action.

We will first calculate $\braket{T^m,X_n}$ and then use the above connection to obtain $[T^m,X_n]$.

In the case $m=1$, $\pi_1(X_n)$ acts on $\braket{T^1,X_n}=\pi_1(X_n)$ by conjugate:
\begin{equation}
[f][\gamma]=[\gamma^{-1}\circ f\circ\gamma],
\end{equation}
therefore $[T^1,X_n]$ is the set of conjugacy classes of group $B_n$. Determining the conjugacy classes of braid group $B_n$ is a difficult problem\footnote{For a review of the conjugacy problem in braid groups, see Ref.~\onlinecite{braidbook}. It relies on the ``Garside structure''  \cite{Garside}.} except for $n\leq 2$. Geometrically they one-to-one correspond to equivalence classes of closed braids in the solid torus, which in turn can be regarded as (special) links ---collections of knots ---in the solid torus\footnote{Note that here the equivalence of links and knots is defined as isotopy inside the solid torus, instead of isotopy inside $\mathbb{R}^3$ (which is the usual meaning of equivalent links). For example, in Sec.~\ref{sec-example1D}, even if the eigenvalues have a nonzero ``spectral vorticity" as in Fig.~\ref{pic-z2braiding}, the associated knot is trivial in $\mathbb{R}^3$.}.

In the case $m=2$, the set $\braket{T^2,X}$ is given by \cite{mathstack} (see also Appendix \ref{app-torus})
\begin{equation}\label{eq-torus}
\bigcup_{\substack{a,b\in \pi_1(X)\\ab=ba}}\pi_2(X)/\langle t-t^a,t-t^b\mid t\in\pi_2(X)\rangle,
\end{equation}
where $t^a$ is the result of $a\in\pi_1(X)$ acting on $t\in\pi_2(X)$. Note that this is a noncanonical identification.  In our problem, the result is:
\begin{equation}\label{eq-2dbased}
\bigcup_{\substack{a,b\in B_n\\ab=ba}}\mathbb{Z}^{n-1}/\braket{t-t^a,t-t^b}\defeq \bigcup_{\substack{a,b\in B_n\\ab=ba}} Q(n,a,b).
\end{equation}
In other words, the classification of based maps is decomposed into several sectors, denoted by a pair of commuting braidings\footnote{For a review of the centralizer problem in the braid group, see Ref.~\onlinecite{braid_review}. The result heavily depends on the geometry of braiding, namely, the Nielsen-Thurston classification \cite{Nielsen-Thurston}.} $a,b\in B_n$; classification within each sector $(a,b)$ is given by the quotient  $Q(n,a,b)$, a finite-generated Abelian group, by identifying $t$ with $t^a$ and $t^b$. 

Physically, the braidings $a,b$ are given by following two nontrivial circles $l_a,l_b$ in the Brillouin zone $T^2$. Since $l_al_bl_{a}^{-1}l_{b}^{-1}$ is the boundary of the 2-cell of $T^2$, the corresponding braiding $aba^{-1}b^{-1}$ must be trivial, hence $ab=ba$. Fixing $a,b$, the map on the 2-cell is determined by $\pi_2(X_n)=\pi_2(F_n)=\mathbb{Z}^{n-1}$, which are essentially $(n-1)$ Chern numbers, up to some ambiguities taken care of by the quotient.

The action $t\mapsto t^a$ here is determined as follows. Recall from \refeq{eq-kernelrep} that $\pi_2(X_n)=\Z^{n-1}$ can be represented by $\{(t_1,\cdots,t_n)\in\Z^n|\sum t_i=0\}$. $a\in B_n$ induced a permutation $\tilde{a}\in S_n$ by forgetting the braiding. Then $t^a$ is represented by a permutation of $(t_1,\cdots,t_n)$:
\begin{equation}\label{eq-permutation}
    (t_1,\cdots,t_n)\mapsto (t_{a(1)},\cdots,t_{a(n)}).
\end{equation}
The proof of this statement is a bit technical. However, since it is the root of most novel classifications in this article, we give a detailed proof in Appendix \ref{app-permutation}.

Now consider the action of $\pi_1(X_n)=B_n$ on $\braket{T^2,X_n}$. Pick $c\in B_n$; then $c$ act on $(a,b)$ by conjugate:
\begin{equation}\label{eq-pairconju}
(a,b)\to (c^{-1}ac,c^{-1}bc).
\end{equation}
The action of $c$ on $\bar{t}\in Q(n,a,b)$ is induced by the action of $\pi_1(X_n)$ on $\pi_2(X_n)$: under $c$, $t$ goes to $t^c$, $t^a$ goes to $t^{ac}$, therefore $t-t^a$ goes to $t^c-(t^c)^{c^{-1}ac}$, therefore the action of $c$ on $\bar{t}\in Q(n,a,b)$ is well-defined as $\bar{t}^c=\overline{t^c}\in Q(n,c^{-1}ac,c^{-1}bc)$. Note that $Q(n,a,b)\cong Q(n,c^{-1}ac,c^{-1}bc)$, due to fact that \refeq{eq-2dbased} and \refeq{eq-permutation} only care about the permutation structure of $a,b$, which is invariant under conjugation. We finally get:
\begin{equation}\label{eq-result2d}
[T^2,X_n]=\bigcup_{\substack{\overline{(a,b)}\\ab=ba}}\overline{Q}(n,a,b),
\end{equation}
where $\overline{(a,b)}$ means a conjugacy class of commuting pairs under \refeq{eq-pairconju}, and $\overline{Q}(n,a,b)=Q(n,a,b)/\pi_1^s(X_n)$ is the orbit set (not quotient group) of $Q(n,a,b)$ under the stabilizer subgroup $\pi_1^s(X_n)$ that keeps $(a,b)$ invariant.

The reason for the appearance of this $\pi_1^s(X_n)$ action can be traced back to the difference between $[T,X]$ and $\braket{T,X}$. Physically, there is no natural way to label the bands (even if no braiding happens, namely, $a=b=\text{id}$). In the Hermitian case, bands are naturally ordered according to their energy, which is not the case for complex energy levels.  Therefore there are some redundancies corresponding to change the label of bands (see \refsec{sec-2DChern} for an example). Also note that, while $Q(n,a,b)$ is a finite generated Abelian group, $\overline{Q}(n,a,b)$ is just a set.   

\section{examples}\label{sec-example}
\subsection{Non-Hermitian bands in one dimension}\label{sec-example1D}
In the case of $m=1$, we know from \refsec{sec-topology} that band structures are classified by the conjugacy classes of group $B_n$. 

Determining the conjugacy classes of braid group $B_n$ is only easy when the number of bands $n=2$, where the braid group $B_2$ is just $\Z$: $a\in\Z$ is the number of elementary braidings (half of a $2\pi$ rotation), with $a$ even implying a pure braid and $a$ odd implying a permutation. In this case, each conjugacy class only contains one element, since $\Z$ is Abelian. Therefore, the classification is given by:
\begin{equation}
[T^1,X_2]=\Z.
\end{equation}
The same classification was found in Ref.~\onlinecite{Fu}; see Eq.(8) therein. Note that authors there use $\frac{1}{2}\Z$ instead of $\Z$: their spectral ``vorticity'' is exactly half of the above invariant.

\subsection{Two-band Chern ``insulators''}\label{sec-2DChern}
Consider the case with $m=n=2$, namely, band structures with two bands in two-dimensional (2D) space. This corresponds to Chern insulators in the Hermitian case. However, it may not be a true insulator in the non-Hermitian case if there is no line gap (to place the chemical potential).

Let us calculate $Q(2,a,b)$ and $\overline{Q}(2,a,b)$, where $a,b\in B_2=\Z$. There are four cases, depending on the even/odd values of $a$ and $b$.
\begin{itemize}
    \item $a,b$ even. Then $t^a=t^b=t$, therefore $Q(2,a,b)=\Z$. The action of $c\in\pi_1(X_n)$ on $Q(2,a,b)$ might be nontrivial: it acts as taking opposite if $c$ is odd (see below), therefore $\overline{Q}(2,a,b)=\mathbb{N}$, the set of nonnegative integers.
    \item $a$ even, $b$ odd. Then $t^a=t$ while $t^b=-t$ in the sense that $(s,-s)^b=(-s,s)$. Therefore $Q(2,a,b)=\braket{(s,-s)}/\braket{(2s,-2s)}\cong\Z_2$. The action of $\pi_1(X_n)$ at most takes $(s,-s)$ to $(-s,s)$, which has no effects on $\Z_2$, therefore $\overline{Q}(2,a,b)=\Z_2$.
    \item $a$ odd, $b$ even. Same as above.
    \item $a,b$ odd. Then $t^a=t^b=-t$, $Q(2,a,b)=\overline{Q}(2,a,b)=\Z_2$.
\end{itemize}
Therefore, band structures are classified by:
\begin{equation}
\bigcup_{a,b\in \Z} \N \text{~or~} \Z_2.
\end{equation}

\subsubsection{Understanding the $\N$ invariant}
The $\N$ classification (instead of $\Z$) comes from the fact that we have no natural way to identify ``upper band'' and ``lower band'' as in the Hermitian case, since there $\C$ is not naturally ordered as $\R$. This new feature of non-Hermitian classification will disappear if, for example, we have a \textbf{fixed} line gap, where the classification will go back to $\Z$.

\subsubsection{Understanding the $\Z_2$ invariant}
The $\Z_2$ classification in some sectors is a more interesting phenomenon. It comes from the interplay between spectrum braiding and eigenvector topology (Chern band). Here, we provide a formula as well as heuristic arguments for this invariant.

We will concentrate on the case where $(a,b)$ is $(\text{even},\text{odd})$. The $(\text{odd},\text{even})$ case is similar; the $(\text{odd},\text{odd})$ case can be handled by a Dehn twist\footnote{Cut the torus along a longitude, resulting in a cylinder; then gradually twist the cylinder so that one boundary is fixed and the other boundary rotates $2\pi$; then glue it back. Using Dehn twist, one can reduce $(\text{odd},\text{odd})$ to $(\text{even},\text{odd})$.}. Also note that the $\Z_2$ invariant essentially comes from the $(\text{odd},\text{even})$ sector of $[T^2,F_2/S_2]$ ($F_2/S_2$ is the space of distinct pairs of states), as one can see by following the same calculations as above.

We claim the following formula for this invariant:
\begin{equation}\label{eq-Chern}
    C=\frac{1}{2\pi}\int_{\text{BZ}}\epsilon_{ij}B_{ij}(k)d^2k+S_{WZW}(a,a').
\end{equation}
In the first term, the Berry curvature $B_{ij}(k)$ \cite{Fu} is defined by following one band, therefore it has a discontinuity at the boundary; the integral is over the conventional $2\pi\times 2\pi$ Brillouin zone. The second term $S_{WZW}(a,a')$ is a boundary Wess-Zumino-Witten (WZW) term\footnote{The boundary term is also known as a Berry phase term \cite{Fradkin}.}, defined as follows. By following one band, one gets a map from a cylinder to the Bloch sphere $S^2$, such that for any point on the left boundary $a$, the corresponding point on the right boundary $a'$ maps to a different point on $S^2$ (since they correspond to linear independent vectors); see Fig.~\ref{pic-Chern}. We then close two boundaries in a \textbf{consistent} way \cite{MooreBalents} such that the above condition is still satisfied on two ``caps." This is always possible since $[\tilde{a}]=0$ in $\pi_1(F_2/S_2)$ due to the assumption that $a$ is even, where $\tilde{a}$ denotes the map from the nontrivial loop (boundary) to the space of pairs $F_2/S_2$. Then $S_{WZW}(a,a')$ is defined as
\begin{equation}
    \frac{1}{4\pi}\times \text{oriented area of caps on~} S^2.
\end{equation}
\begin{figure}
    \centering
    \includegraphics[width=1\columnwidth]{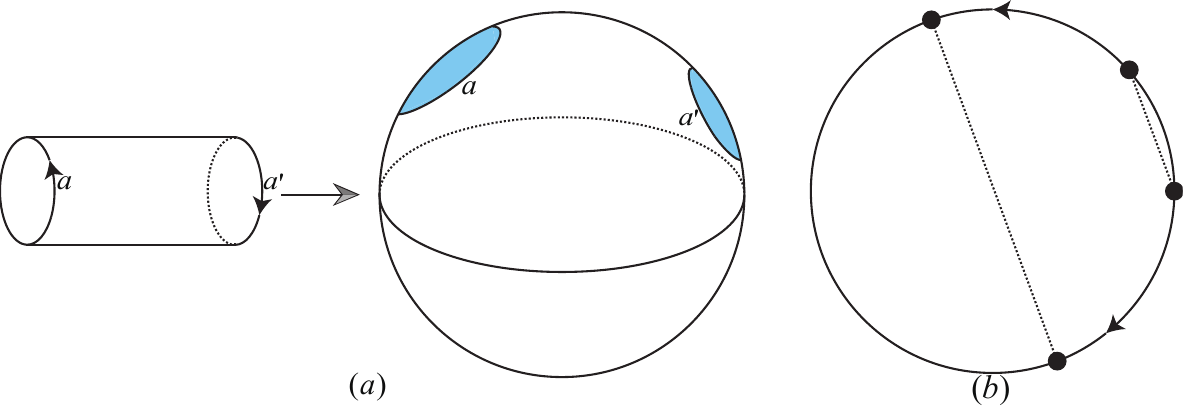}
    \caption{(a) Following one band, we get a map from cylinder to $S^2$. To define the WZW term, one needs to close the cylinder with two caps in a consistent way (not necessarily antipodal in the non-Hermitian case). Here the arrow represents orientation, not to be confused with the way one identifies two boundaries. (b) A deformation retraction from  $F_2/S_2$ to $RP^2$. Each point in $F_2/S_2$ corresponds to a pair of different points in $S^2$. We draw the great circle corresponding to that pair, and then gradually push the pair to an antipodal pair (corresponding to a point in $RP^2$). If a pair is already antipodal, nothing needs to be done. In this way, we have defined a deformation retraction. }\label{pic-Chern}
\end{figure}

By adding the caps, we obtain a closed manifold, therefore \refeq{eq-Chern} is an integer. As always, there are some ambiguities in the definition of $S_{WZW}$, corresponding to the ambiguities in adding the caps. Importantly, the consistency  for the caps requires that \refeq{eq-Chern} can only be shifted by 2 (instead of 1) by the ambiguities\footnote{Heuristically, one cannot just flip one cap while leaving the other unflipped in Fig. \ref{pic-Chern}(a), otherwise the Brouwer's fixed point theorem guarantees an inconsistent point.}. To see this, note that we have a deformation retraction from $F_2/S_2$ to $RP^2$, as defined in Fig. \ref{pic-Chern}(b). After this deformation retraction, the consistency condition simply requires that corresponding points in two boundaries map to antipodal points. Therefore, two caps should always be antipodal to each other, and 
\begin{equation}
    S_{WZW}(a,a')=2S_{WZW}(a).
\end{equation}
Therefore, $C$ is only defined mod 2 and we obtain a $\Z_2$ invariant.

Another way to understand the $\Z_2$ invariant is as follows. Using the above deformation retraction, we see that this $\Z_2$ can also be understood from $[T^2,RP_2]$. For the $(\text{odd},\text{even})$ sector, we have the diagram
\begin{equation}
\begin{tikzcd}
2T^2 \arrow[r] \arrow[d] & S^2 \arrow[d] \\
T^2 \arrow[r]& RP^2
\end{tikzcd},
\end{equation}
therefore the classification amounts to classifying covariant maps from $2T^2$ to $S^2$. Here, $2T^2$ is a double cover of Brillouin zone $T^2$, by gluing two cylinders along the $b$ direction; see \reffig{pic-z2invariant}; ``covariant" means corresponding points in the left and right cylinder should map to antipodal points. 

Now, we can add a ``bump'' of Berry curvature with positive 1 integral and a ``bump'' with negative 1 integral by deforming the eigenstates, both in the ``upper band.'' It is necessary to add opposite bumps due to the covariant constraint.  We can then move a pair of bumps along the $b$ direction for $2\pi$. After this procedure, we effectively add a $C=2$ bump to the  ``upper band'' and a $C=-2$ bump to the ``lower band.'' Therefore, $C$ is again only well defined mod 2.

\begin{figure}
    \centering
    \includegraphics[width=1\columnwidth]{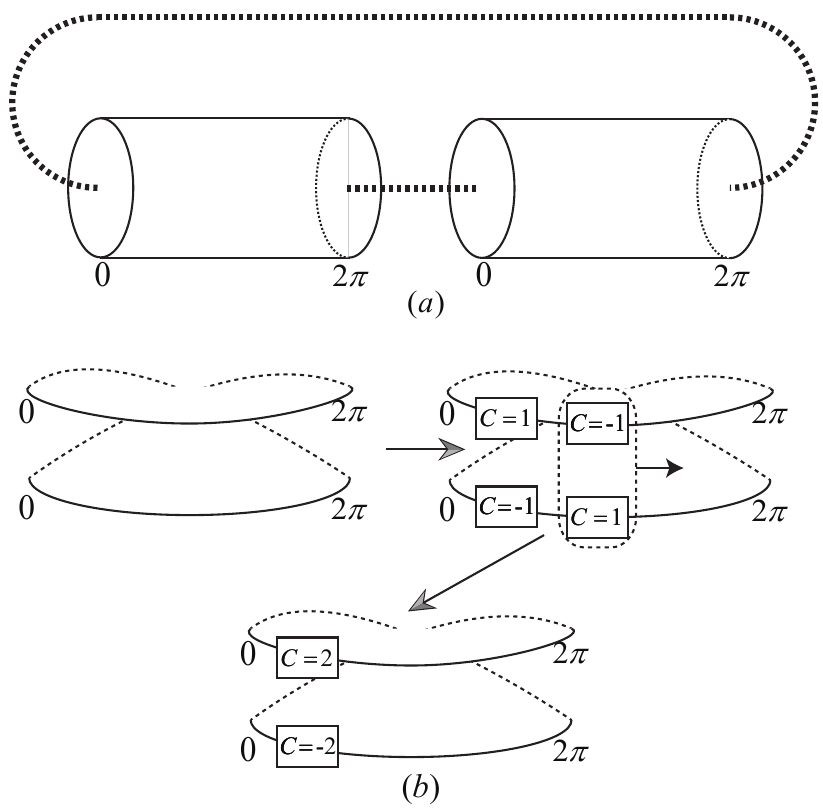}
    \caption{Physical origin of the $\Z_2$ invariant. (a) Due to energy level braiding, the Brillouin zone is better considered as a torus with size $4\pi\times 2\pi$ on which the energy $E(k)$ and wave function $\psi(k)$ is well defined: if one follows one band on the conventional Brillouin zone of size $2\pi\times 2\pi$, then after $2\pi$ one goes to the other band. The dashed lines indicate this ``gluing.'' (b) Each solid line represents a cylinder similar to those in (a), the dashed lines again indicate the ``gluing.'' Starting with trivial bands, one adds $\pm 1$ ``bumps'' to the upper band and $\mp 1$ ``bumps'' to the lower band, then moves the right pair (circled by the dashed line) to the right. After $2\pi$, they will switch, resulting in a $C=2$ ``bump'' in the upper band and $C=-2$ ``bump'' in the lower band. }\label{pic-z2invariant}
\end{figure}

\subsection{$n$-band Chern ``insulators''}\label{sec-nDChern}
The conjugacy classes and commuting pairs are hard to describe if $n>2$. However, the quotients $Q(n,a,b)$ for given braiding sector $(a,b)$ are not hard to calculate.

Recall from \refeq{eq-permutation} that $\pi_2(X_n)=\Z^{n-1}=\{(t_1,\cdots,t_n)|\sum t_i=0\}$ and $(t_1,\cdots,t_n)^a=(t_{a(1)},\cdots,t_{a(n)})$ where $a(i)$ is the image of $i$ under the permutation $a$. Therefore, the subgroup to be quotient out is (we only write down the $t^a$ part):
\begin{equation}\label{eq-subgroup}
\begin{aligned}
\braket{t-t^a}=\braket{(t_1-t_{a(1)},\cdots,t_n-t_{a(n)})|\sum t_i=0}.
\end{aligned}
\end{equation}

There are only $(n-1)$ independent $t_i$: we can use $t_n=-\sum_{i=1}^{n-1}t_i$ to get rid of the constraint. Then the subgroup \refeq{eq-subgroup} is generated by $2(n-1)$ (not necessarily independent) generators. For example, take $t_1=1, t_2=\cdots=t_{n-1}=0, t_n=-1$ and consider the action of $a$, we get a generator $e_1-e_{a^{-1}(1)}-e_n+e_{a^{-1}(n)}$, where $e_1=(1,0,\cdots,0,-1), e_2=(0,1,0,\cdots,0,-1),e_{n-1}=(0,\cdots,0,1,-1), e_n=(0,\cdots,0)$. Therefore, the $Q(n,a,b)$ is the quotient of $\braket{e_1,\cdots,e_{n-1}}$ with $2(n-1)$ relations $e_i-e_{a^{-1}(i)}-e_n+e_{a^{-1}(n)} (i=1,\dots,n-1)$. Its structure can be determined by standard procedure using the Smith normal form (normal form for integer matrix under elementary row/column operations).

As a simple example, consider the case where $a: 1\to 2\to 3\to 4\to 1$, $b: 1\to 4\to 3\to 2\to 1$. This is possible, say, by taking $a$ to be a braiding with such permutation structure, then taking $b=a^{-1}$. The auxiliary ``generators'' given by $t_i=1, t_1=\cdots=t_{i-1}=t_{i+1}=\cdots=t_n=0$ written in terms of $n$-tuples are the $i$th columns of the following matrix
\begin{equation}\label{eq-cycmatrix}
\begin{bmatrix}
1&-1&&\\
&1&-1&\\
&&1&-1\\
-1&&&1
\end{bmatrix}.
\end{equation}
Since the true generators are given by taking $t_i=1, t_n=-1, t_j=0 (j=1,\cdots,i-1,i+1,\cdots,n-1)$, we need to subtract the last column from all other columns and delete the last row. The matrix of true generators for $\braket{t-t^a}$ is:
\begin{equation}\label{eq-genmatrix}
\begin{bmatrix}
1&-1&\\
&1&-1\\
1&1&2\\
\end{bmatrix},
\end{equation}
and similarly for $b$:
\begin{equation}
\begin{bmatrix}
2&1&1\\
-1&1&\\
&-1&1\\
\end{bmatrix}.
\end{equation}
Juxtaposing those two matrices and calculating the Smith normal form of the result, we get
\begin{equation}
\begin{bmatrix}
1&0&0&0&0&0\\
0&1&0&0&0&0\\
0&0&4&0&0&0\\
\end{bmatrix},
\end{equation}
which means $Q(4,a,b)=\Z_4$.

Another example is when $b=\text{id}$, i.e., no permutation. We can decompose $a$ into cycles: $a=(...)(...)...(...)$. Denote the length of each cycle to be $l_1,\cdots,l_k$  $(\sum_{i=1}^k l_i=n)$ where $k$ is the number of cycles. In this case, we can follow the above procedure and get an explicit formula for $Q(n,a,b)$.

Denote an $l\times l$ matrix of form \refeq{eq-cycmatrix} to be $J_l$; then the counterpart of \refeq{eq-cycmatrix} (where columns are  auxiliary ``generators'') is:
\begin{equation}\label{eq-blockJ}
\begin{bmatrix}
J_{l_1} &&&\\
&J_{l_2}&&\\
&&\ddots&\\
&&&J_{l_k}
\end{bmatrix},
\end{equation}
and the counterpart of \refeq{eq-genmatrix} by subtraction and deleting is:
\newcommand{\blocked}[1]{\multicolumn{2}{c|}{\smash{\raisebox{.5\normalbaselineskip}{$\mathlarger{\mathlarger{#1}}$}}}}
\newcommand{\blockedd}[1]{\multicolumn{2}{c}{\smash{\raisebox{.5\normalbaselineskip}{$\mathlarger{\mathlarger{#1}}$}}}}
\begin{equation}
\left[\begin{array}{cc|cc|cc|cc}
&         &      &       &&&&\\
\blocked{J_{l_1}}&  &    &&&&   \\
\hline
 &   &       &  &&&&\\
 &        & \blocked{J_{l_2}}&&&&\\
 \hline
 &&&&&&&\\
  &&&&\blocked{\ddots}&&\\
 \hline
 &&&&&&\\
 1&1&1&1&1&1&\blockedd{\bar{J}_{l_k}}
\end{array}\right],
\end{equation}
where $\bar{J}_{l_k}$ is an $(l_k-1)\times (l_k-1)$ matrix of form \refeq{eq-genmatrix}. To clarify, the last row of the above big matrix is $(1,1,\dots,1,2$). It is easy to perform row transformation on the above matrix and get:
\begin{equation}
\left[\begin{array}{cc|cc|cc|cc}
&         &      &       &&&&\\
\blocked{K_{l_1}}&  &    &&&&   \\
\hline
&   &       &  &&&&\\
&        & \blocked{K_{l_2}}&&&&\\
\hline
&&&&&&&\\
&&&&\blocked{\ddots}&&\\
\hline
&&&&&&\\
0&l_1&0&l_2&\cdot\cdot&\cdot\cdot&\blockedd{\overline{K}_{l_k}}
\end{array}\right],
\end{equation}
where $K_l=\text{diag}\{1,\cdots,1,0\}$ (size $l$), $\overline{K}_l=\text{diag}\{1,\cdots,1,l\}$ (size $l-1$). To clarify, the last row of the above big matrix is $(0,\cdots,0,l_1,0,\cdots,0,l_2,\cdots,0,\cdots,0,l_k)$. Therefore, the structure of $Q(n,a,b)$ is:
\begin{equation}\label{eq-gcd}
Q(n,a,b)=\Z^{k-1}\oplus\Z_{\gcd{(l_1,\dots,l_k)}},
\end{equation}
where $\Z_{\gcd{(l_1,\dots,l_k)}}$ is the greatest common divisor and $\Z_1$ means trivial group $\{0\}$ if $\gcd=1$.

The $\Z^{k-1}$ comes from the fact that we have $k$ groups of bands (bands that transfer to each other under braidings are in the same group). Each band has an integer Chern number, with summation equal to 0. This is the same as the Hermitian case. However, there is an extra $\Z_{\text{gcd}}$. We also see that the extra torsion part is determined by \textbf{all band groups} as a whole, not from any specific band group. It shows some complicated interplay between energy braiding and eigenstates topology.

With other permutations $a,b (a,b\neq \text{id})$, it is possible to get more than one torsion. An example is $a: 1\leftrightarrow 2, 3\leftrightarrow 4$ with $b: 1\leftrightarrow 3, 2\leftrightarrow 4$. The algorithm will give us $\Z_2\oplus\Z_2$.

\section{instability}\label{sec-instability}
Examples in \refsec{sec-example} show that our homotopical approach reveals more topological invariants than the traditional $K$-theory approach. For example, a two-band Chern ``insulator'' in 2D may reveal some $Z_2$ classification due to the nontrivial topology of the spectrum.

Similar phenomena also happen in the Hermitian world. For example, in three dimensions (3D), insulators in class $A$ are always trivial according to the periodicity table. However, one can still have a $\Z$ classification if the number of bands is fixed to be 2, due to $\pi_3(\mathbb{C}P^1)=\Z$. This is called the Hopf insulator \cite{Hopfinsulator}, which is \textbf{unstable} against adding more bands. Indeed, as long as one adds one more band above and below the Fermi surface respectively, the classification will be trivial due to $\pi_3(Gr_{\mathbb{C}}(3,1))=0$ (and similarly for more bands), where $Gr_{\mathbb{C}}(3,1)$ is the complex Grassmannian.

A natural question arises: Are our new topological invariants stable against adding bands?

As an example, let us consider two-band systems in 2D as in \refsec{sec-2DChern}, and add one more band. Since the classification is decomposed into braiding sectors and each sector has its own classification set, it only makes sense to add a band with no permutation with previous bands (therefore it does not alter the braiding sectors). For each sector $(a,b)$, adding a band without permutation is to add a length-$1$ cycle after previous $a,b$, denoted by $a', b'$.
\begin{itemize}
    \item $a,b$ even. Then $a'$ and $b'$ are trivial permutations, therefore $Q(3,a,b)=\Z^2$, which are just two Chern numbers.
    \item $a$ even, $b$ odd. Then is $a'$ trivial while $b'$ decomposes as $(1 2)(3)$. \refeq{eq-gcd} shows that $Q(3,a',b')=\Z$.
    \item $a$ odd, $b$ even. Same as above.
    \item $a,b$ odd. Then both $a'$ and $b'$ are of the form $(1 2)(3)$. A Smith normal form calculation shows $Q(3,a',b')=\Z$.
\end{itemize}

In all cases, we see that the extra band contributes a Chern number $\Z$, as well as kills the old $\Z_2$ invariants if there are any, even if the $\Z_2$ comes from other bands that never intersect with the added band. This is possible since the $\Z_2$ comes not just from those two bands, but from all three bands as a whole, as noted at the end of \refsec{sec-nDChern}.

The instability of $\Z_2$ can be understood as follows. Assuming $a$ odd and $b$ even, consider the procedure shown in \reffig{pic-instability}: we start with three bands with Chern numbers $1,-1,0$, where the first two bands switch to each other after $2\pi$ as in \reffig{pic-z2invariant}(a). Add a negative bump and a positive bump in band 1, as well as a positive bump and a negative bump in band 3; then move the rightmost bump pair in band 1 and 3, so that the positive bump cancels the negative bump in band 2; the remaining bumps in band 1 and 3 can be easily canceled, leaving three trivial bands. During the procedure, the local neutral condition is always satisfied. Note that the third band is essential for this argument to work.

\begin{figure}
    \centering
    \includegraphics[width=\columnwidth]{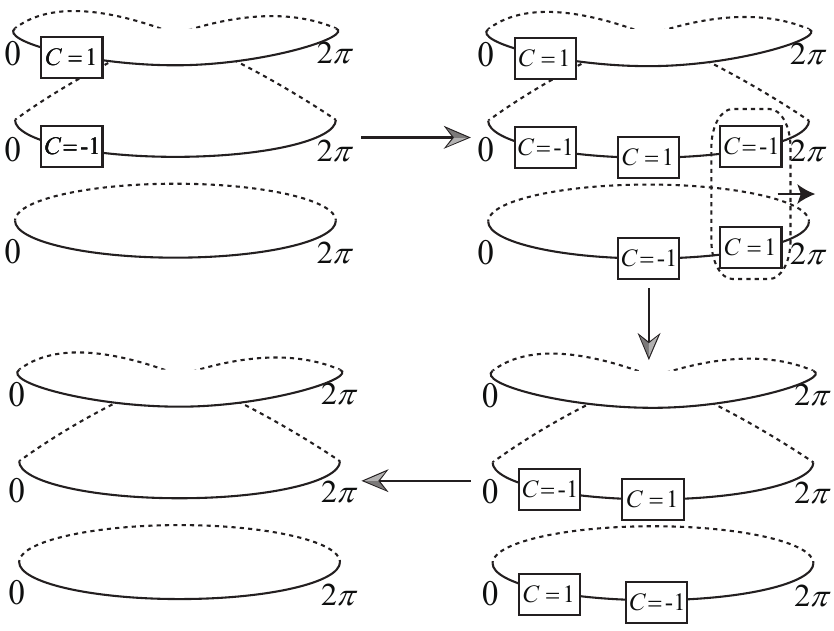}
    \caption{Physical origin of instability. Similar to \reffig{pic-z2invariant}, solid lines represent the Brillouin; dashed lines indicate the ``gluing.'' We start with three bands with Chern numbers $1,-1,0$ respectively, where the first two bands switch to each other after $2\pi$. If we forget band 3, it will be nontrivial, indicated by a $\Z_2$ invariant. However, adding band 3 and follow the procedure shown in the figure, we can make the band structure totally trivial.}\label{pic-instability}
\end{figure}

Similarly, as long as $b=\text{id}$, \refeq{eq-gcd} shows that $Q(n+1,a',b')$ has no torsion part.

We can prove a general result regarding the instability, even if $b\neq\text{id}$. For a system with $n$ bands, consider the braiding sector labeled by the commuting pair $(a,b)$. Let us add an extra no-permutation band; then the matrix of auxiliary ``generators'' is
\begin{equation}
\left[\begin{array}{cc|cc|c}
&         &      &       &0\\
\blocked{A}&  \blocked{B}    &0   \\
\hline
0&0   &0       &0  &0
\end{array}\right],
\end{equation}
where $A$ and $B$ are of form \refeq{eq-blockJ} up to some congruent transformation by permutation matrices. The matrix of generators [counterpart of \refeq{eq-genmatrix}] is therefore just
\begin{equation}
\mathlarger{\mathlarger{[A~|~B]}}.
\end{equation}
We claim that the invariant factors in its Smith normal form must be 1. Indeed, we claim a more general statement: 
\begin{claim*}
    Assume a matrix has the following property: there are either 2 or 0 nonzero elements in each column; in the former case, there is exactly one 1 and one $-1$. Then the invariant factors of this matrix must be 1 (if there are any).
\end{claim*}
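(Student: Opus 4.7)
The plan is to recognize such a matrix as (essentially) the incidence matrix of a directed multigraph, establish that it is \emph{totally unimodular}---every square submatrix has determinant in $\{0,+1,-1\}$---and then deduce the claim from the standard formula for invariant factors in terms of determinantal divisors.

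First I would observe that deleting the all-zero columns does not change the Smith normal form, so I may assume every column contains exactly one $+1$ and one $-1$. Such a column encodes an oriented edge between the two rows carrying its nonzero entries, so the matrix is the (column-)incidence matrix of a directed graph on the row-index set.

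Next I would prove total unimodularity by strong induction on the size $k$ of a square submatrix $M$. The base case $k=1$ is immediate, since all entries lie in $\{0,\pm 1\}$. For $k>1$, I inspect the columns of $M$ and split into three cases: (i) some column of $M$ is identically zero, in which case $\det M = 0$; (ii) some column of $M$ contains exactly one nonzero entry (necessarily $\pm 1$), in which case I expand the determinant along that column and reduce to a $(k-1)\times(k-1)$ submatrix of the same form, to which the inductive hypothesis applies; (iii) every column of $M$ contains both of its nonzero entries, so the sum of all rows of $M$ is the zero row, forcing $\det M = 0$. In all three cases $\det M \in \{0,\pm 1\}$.

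Finally I would invoke the standard relation $s_k = d_k / d_{k-1}$, where $d_k$ is the $k$-th determinantal divisor (the gcd of all $k\times k$ minors of the matrix) and $s_k$ is its $k$-th invariant factor. Total unimodularity forces every nonzero $d_k$ to equal $1$, so every invariant factor that appears equals $1$. The only substantive step is case (iii) of the induction: one must notice that when both nonzero entries of each column survive in the selected rows, the rows are automatically linearly dependent---everything else is routine bookkeeping about cofactor expansion and Smith normal form.
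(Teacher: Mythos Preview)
Your argument is correct and takes a genuinely different route from the paper. The paper proceeds by a direct row-reduction induction on the total number $N$ of nonzero entries: assuming $A_{1,1}=1$ and $A_{2,1}=-1$, it adds row~1 to row~2, checks by a small case analysis that every column of the reduced matrix still obeys the ``at most one $+1$ and one $-1$'' rule with no more nonzeros than before, clears the first row by column operations, and recurses on the lower-right block. This literally builds the Smith normal form as $\operatorname{diag}(1,\ldots,1,0,\ldots,0)$.

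Your approach instead identifies the matrix as a directed-graph incidence matrix and invokes the classical Poincar\'e argument for total unimodularity, then reads off the invariant factors from the determinantal-divisor formula $s_k=d_k/d_{k-1}$. This is cleaner and situates the claim in a well-known combinatorial context; the paper's argument is more hands-on and self-contained (no appeal to the minor--gcd characterization of Smith normal form). One small remark on your phrasing in case~(ii): the cofactor $M'$ need not itself have two nonzeros per column, but it is still a submatrix of the original matrix, so the strong induction on $k$ applies directly---you already set this up correctly, just be careful not to lean on ``of the same form'' there.
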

\begin{proof}
    We prove by induction on the total number of nonzero elements $N$. From the assumption, $N$ must be even. If $N=0$, then the statement is trivially true.
    
    Now assume the state is true for $N$ and less, let us consider $N+2$. Denote the matrix to be $A$. Without lose of generality, assume $A_{1,1}=1$, $A_{2,1}=-1$, $A_{i,1}=0$ for $i\geq 3$. Add row 1 to row 2, denote the new matrix as $A'$, then $A'_{i,1}=0$ for $i\geq 2$. Moreover, from the assumption on matrix $A$, there are only seven possibilities happened to $\begin{bmatrix} A_{1,2} \\ A_{2,2} \end{bmatrix} $ (we only write four of them, the other three are obtained by adding negative signs):
    \begin{equation}
    \begin{bmatrix} 0 \\ 0 \end{bmatrix} \to \begin{bmatrix} 0 \\ 0 \end{bmatrix},
    \begin{bmatrix} 1 \\ -1 \end{bmatrix} \to \begin{bmatrix} 1 \\ 0 \end{bmatrix},
    \begin{bmatrix} 1 \\ 0 \end{bmatrix} \to \begin{bmatrix} 1 \\ 1 \end{bmatrix},
    \begin{bmatrix} 0 \\ 1 \end{bmatrix} \to \begin{bmatrix} 0 \\ 1 \end{bmatrix}.
    \end{equation}
    Therefore, the column $(A'_{2,2},\cdots,A'_{n,2})^T$ satisfies the same assumption as columns of $A$. The number of nonzero elements in this shorter column is less than or equal to that in   $(A_{1,2},\cdots,A_{n,2})^T$. Other columns are similar.
    
    Then we use column transformations to make $A'_{1,i}=0$ for $i\geq 2$, while keeping other elements. $A'$ is of the form:
    \begin{equation}
    A'=\left[\begin{array}{c|cc}
    1& 0 &0   \\
    \hline
    0&&\\
    0&  \blockedd{A''}
    \end{array}\right].
    \end{equation}
    We can then apply the induction assumption on $A''$ and finish the proof.
\end{proof}

Therefore, $Q(n+1,a',b')$ is always of the form $\Z^k$. This means all torsion invariants $\Z_i (i\geq 2)$ are unstable against adding a no-permutation band.

\section{conclusion and outlook}
In this article, we considered the homotopical classification of non-Hermitian band structures from first principles. We found that the whole classification set is decomposed into several sectors, based on the braiding of energy levels. Fix a braiding pattern, we consider the classification coming from nontrivial eigenstates topology. Since different bands will transfer to each other under braidings, the classification of band topology is not just a direct summation of Chern numbers. Instead, the interplay between energy level braiding and eigenstates topology gives some new torsion invariants.

The torsion invariants come from all bands as a whole, instead of some specific band group. Namely, even if we add a band with no crossing with previous bands, the torsion invariants can in principle be changed. We found that the torsion invariants are unstable, in the sense that just adding a trivial band can trivialize them. This statement is proved based on an interesting combinatorial argument.

Many future works can be done in this framework. First of all, due to the complexity of the braid group, it is complicated to describe its conjugacy classes and commuting pairs, let alone the conjugacy classes of commuting pairs. It will be useful to develop more explicit descriptions of the braiding sectors. On the more physical side, it is very interesting to consider the physical consequence of these novel invariants, in terms of physical observables. On the other hand, in this article we only consider the case with no symmetry as an first step. It is necessary to consider other symmetry classes using our framework in order to find a complete classification.

\textit{Note added}: after this work was presented on arXiv, Ref.~\onlinecite{lol} appeared and has partial overlap with our work. We thank the authors for bringing it to our attention and some helpful discussions thereafter.

\begin{acknowledgements}
This work is supported by NSF Grant No.~DMR-1848336. ZL is supported by the Mellon fellowship and PQI fellowship at University of Pittsburgh. ZL is grateful for the hospitality of the Visiting Graduate Fellowship program at Perimeter Institute where part of this work was carried out. Research at Perimeter Institute is supported in part by the Government of Canada through the Department of Innovation, Science and Economic Development Canada and by the Province of Ontario through the Ministry of Economic Development, Job Creation and Trade.

\end{acknowledgements}

\appendix
\section{Some algebraic topology details}

\subsection{Homotopy class $\braket{T^2,X}$: proof of \refeq{eq-torus} }\label{app-torus}

In this section, we prove \refeq{eq-torus} in detail:
\begin{equation}\label{eq-torusapp}
\begin{aligned}
&\braket{T^2,X}\\
=&\bigcup_{\substack{a,b\in \pi_1(X)\\ab=ba}}\pi_2(X)/\langle t-t^a,t-t^b\mid t\in\pi_2(X)\rangle,
\end{aligned}
\end{equation}
Namely, a homotopy class $[f]\in \braket{T^2,X}$ one-to-one corresponds to an element in the set described by the right-hand side of \refeq{eq-torusapp}.

For each pair $(a,b)\in\pi_1(X)^2$ such that $ab=ba$, we choose and fix two loops\footnote{Note the notations here: $a$ and $b$ are homotopy classes of loops, $a_0$ and $b_0$ are loops.} $a_0, b_0$, and also choose and fix a homotopy from $a_0b_0a_0^{-1}b_0^{-1}$ to 0, denoted by $F(a,b)$. Note that $a_0,b_0,F(a,b)$ are arbitrarily chosen. But once they are chosen, they are fixed for all.

Given $[f]\in \braket{T^2,X}$, we choose a map $f:T^2\to X$ in this class. There are two nontrivial loops (fixed) in $T^2$, denoted by $l_1,l_2$, with the same base point. The restriction of $f$ on $l_1$ defines a map (loop) $f_1:S^1\to X$ and therefore an element $a\in \pi_1(X)$. Similarly we have $b\in\pi_1(X)$. Since the loop $l_1l_2l_1^{-1}l_2^{-1}$ is homotopic to 0 in $T^2$, we know $ab=ba$ in $\pi_1(X)$. Obviously $a$ and $b$ are well-defined functions of $[f]$. 

 Since loop $f_1$ is homotopic to $a_0$, there exists (not unique) a homotopy $F(a)$ from $f_1$ to $a_0$; the same for $b$ and we have a $F(b)$. Now define an element $h$ in $\pi_2(X)$ as in \reffig{pic-torus}(a). In this way we get an element $\bar{h}\in \pi_2(X)/\langle t-t^a,t-t^b\mid t\in\pi_2\rangle$.
 \begin{figure}
    \centering
    \includegraphics[width=0.9\columnwidth]{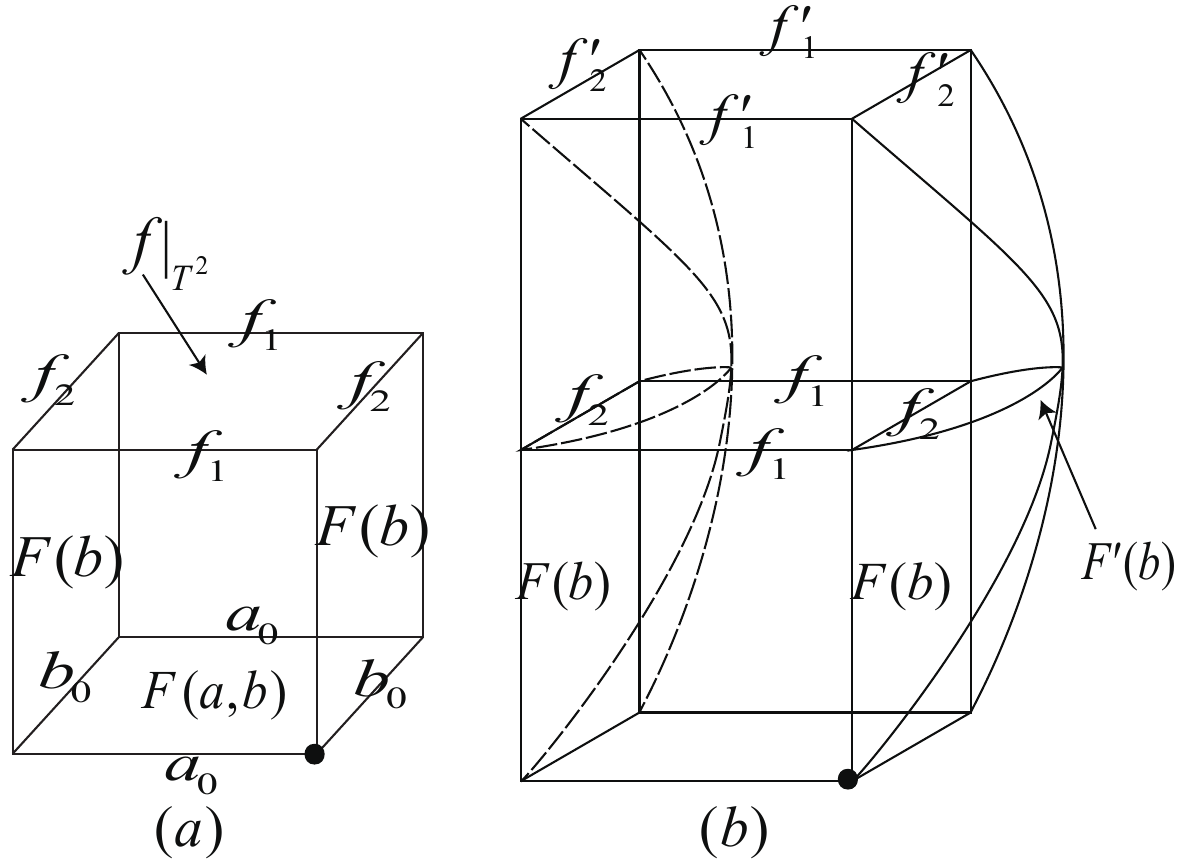}
    \caption{(a) The definition of $h\in\pi_2(X)$. It is defined by the surface of the cube, made of $f|_{T^2},F(a),F(b)$ and the pre-chosen $F(a,b)$. (b) $\bar{h}$ is well-defined. The ``cap" attached to the right surface (the other one in the left denoted with dashed line) is the other homotopy $F'(b)$. Here, $F(b), F'(b)$ and the homotopy between $f_2$ and $f'_2$ defined an element $s\in\pi_2(X)$. Pictorially it is the right ``cap"+right surface. The one on the left corresponds to $s^a$, since we have to fixed a base point when defining $\pi_2(X)$, say the one denoted by $\bullet$. }\label{pic-torus}
\end{figure}

We need to prove that $\bar{h}$ does not depend on the choice of $f, F(a), F(b)$. To do this, assume we choose a different $f'$ and therefore different loops $l'_1,l'_2$ in $X$, different homotopy $F'(a)$ and $F'(b)$, and different element $h'\in\pi_2(X)$. To compare $h$ and $h'$ we need to fix a base point. Defined $t$ to be the element in $\pi_2(X)$ determined by $F(a), F'(a)$ and the homotopy between $f, f'$; see \reffig{pic-torus}(b). Also from this figure, we know that
\begin{equation}
    h'=h+t-t^b+s-s^a,
\end{equation}
therefore $\bar{h}=\bar{h'}$.

The inverse map is easy to define. Therefore we have proved \refeq{eq-torusapp}.

\subsection{The action of $\pi_1(X_n)$ on $\pi_2(X_n)$: proof of \refeq{eq-permutation}}\label{app-permutation}

In this section, we prove that the action $t\mapsto t^a$ is determined by \refeq{eq-permutation}, which we rewrite here for convenience:
\begin{equation}\label{eq-permutationapp}
    (t_1,\cdots,t_n)\mapsto (t_{a(1)},\cdots,t_{a(n)}).
\end{equation}

To see this, consider the projection
\begin{equation}
    X_n=(\Conf_n\times F_n)/S_n\xrightarrow{j} F_n/S_n,
\end{equation}
which induces an isomorphism on $\pi_2$ and a surjection $B_n\to S_n$ on $\pi_1$. Therefore, the action of $\pi_1(X_n)$ on $\pi_2(X_n)$ factorizes through the action of $\pi_1(F_n/S_n)=S_n$ on $\pi_2(F_n/S_n)$:
\begin{equation}
\begin{tikzcd}
\pi_2(X_n) \arrow[r,"B_n"] \arrow[d,"\cong"] & \pi_2(X_n) \arrow[d,"\cong"] \\
\pi_2(F_n/S_n) \arrow[r,"S_n"]& \pi_2(F_n/S_n)
\end{tikzcd}.
\end{equation}
%\begin{equation}
%    j_*(\alpha[\gamma])=(j_*\alpha)[j(\gamma)],
%\end{equation}
%where $\alpha\in\pi_2(X_n)$, $[\gamma]\in\pi_1(X_n)=B_n$, $[j(\gamma)]\in\pi_1(F_n/S_n)=S_n$.
Geometrically, the action of $[\gamma]\in\pi_1(X_n)$ on $\pi_2(X_n)$ is given by any homotopy $f_t: S^2\to X_n$ such that $f_t(s_0)=\gamma(t)$ (here $s_0$ is a base point on $S^2$); under projection $j$, $j\circ f_t$ gives a homotopy $S^2\to F_n/S_n$ and therefore an action of $p([\gamma])\in S_n$ on $\pi_2(F_n/S_n)$.

We now show that the action of $\pi_1(F_n/S_n)=S_n$ on $\pi_2(F_n/S_n)$ is given by \refeq{eq-permutationapp}. Indeed, assuming the loop $\gamma$ in $F_n/S_n$ is lifted to $\tilde{\gamma}$ in $F_n$, $\tilde{\gamma}(1)=g\tilde{\gamma}(0)$ where $g\in S_n$. Then a homotopy $S^2\to F_n/S_n$ corresponding to the $[\gamma]$ action will be lifted to a homotopy that deforms the map $\tilde{f_0}:S^2\to F_n$ to $\tilde{f_1}:S^2\to F_n$ such that $\tilde{f_1}(s_0)=\tilde{\gamma}(1), \tilde{f_0}(s_0)=\tilde{\gamma}(0)$. In order to identify the corresponding element of $\tilde{f_1}$ in $\pi_2(F_n/S_n)$, one just needs to consider $g^{-1}\circ \tilde{f_1}$ since they ($g^{-1}\circ \tilde{f_1}$ and $\tilde{f_1}$) are the same map after projection to $F_n/S_n$ and $g^{-1}\tilde{f_1}(s_0)=\tilde{f_0}(s_0)$ is the correct base point; see \reffig{pic-permutation} for illustration of the above argument.
\begin{figure}[t]
    \centering
    \includegraphics[width=\columnwidth]{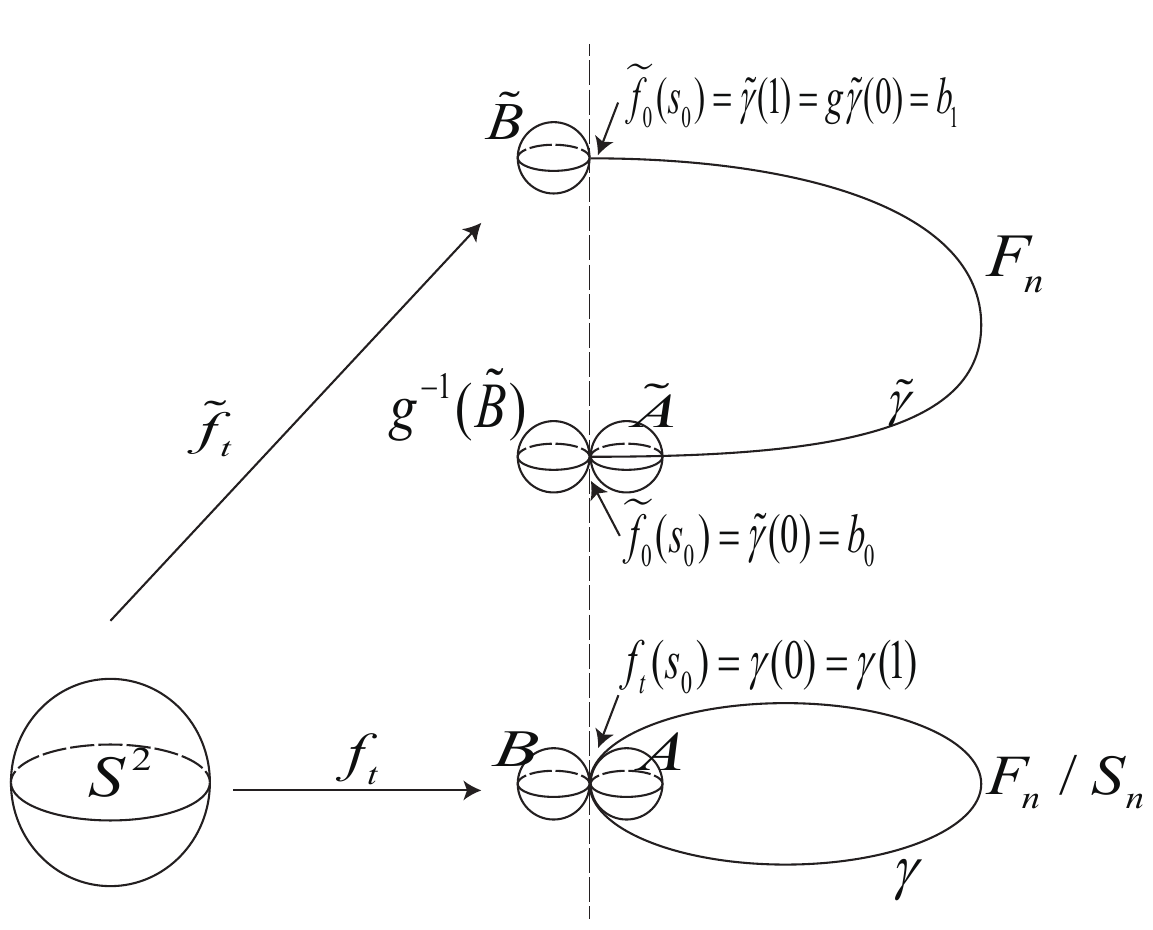}
    \caption{An illustration for the proof of \refeq{eq-permutation}. Assuming $[\gamma]$ action takes an element in $\pi_2(F_n/S_n)$ (represented by $A$) to $B$, then the lift $\tilde{f_t}$ will be a homotopy from $\tilde{A}$ to $\tilde{B}$. To identify the corresponding element of $\tilde{B}$ in $\pi_2(F_n/S_n)$, just consider $g^{-1}(\tilde{B})$ since it is the same as $\tilde{B}$ under projection.}\label{pic-permutation}
\end{figure}

Now we identify $g^{-1}\circ \tilde{f_1}$ in $\pi_2(F_n)=\Z^{n-1}$ according to the injection $\pi_2(F_n)\xrightarrow{\partial}\pi_1(\C^{*n})$ in \refeq{eq-seqm2}. Recall that the boundary map $\partial$ is defined by a homotopy lifting. For example, to identify $\partial(\tilde{f_1})$, one regards $\tilde{f_0}:S^2\to F_n$ as a map $I^2\to F_n$, where $\tilde{f_0}(\partial I^2)=\{b_0\}$; then as a homotopy $H_t: I^1\to F_n$. Then lift $H_0$ along into $GL(n)$. This is just the trivial map to a point, say $e_0$. Then use the relative homotopy lifting property to lift $H_t$ for $t\in I$. $H_1(I)$, which is the lift of $\tilde{f_0}$ on $I\times\{1\}$, is now a loop based on $e_0$, which induces an element in $\pi_1(\C^{*n})$.

\begin{figure}
    \centering
    \includegraphics[width=\columnwidth]{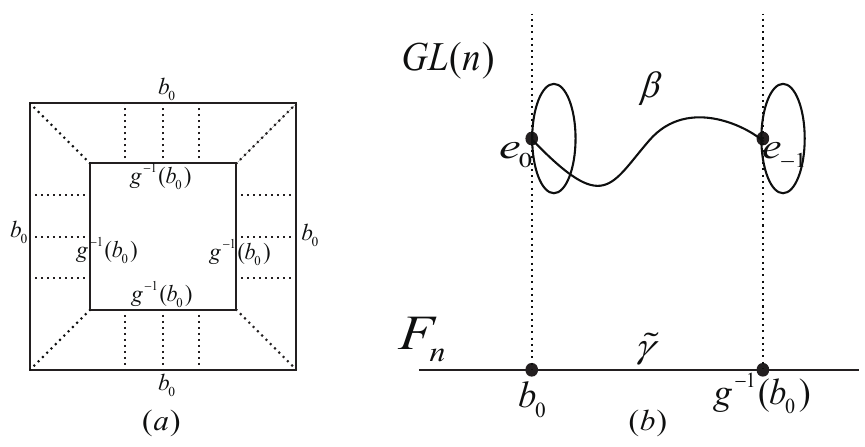}
    \caption{(a) The definition of $g^{-1}\circ \tilde{f_1}$ (corresponds to $g^{-1}(\tilde{B})$ in \reffig{pic-permutation}). Regard $S^2$ as $I^2$ with boundary points identified, then draw a smaller square inside it. Define the map on the inner square as $g^{-1}\circ \tilde{f_0}$ [would be $g^{-1}(\tilde{A})$ in the notation of \reffig{pic-permutation}], so that the inner boundary maps to $g^{-1}(b_0)$. Then one connects the inner and outer boundaries by the paths $g^{-1}(\tilde{\gamma})$. One gets a well-defined map from $I^2$ to $F_n$, with the outer boundary mapping to $b_0$. (b) Illustration of the homotopy lifting process.}\label{pic-homotopy}
\end{figure}

In our case, $g^{-1}\circ \tilde{f_1}$ is just given by \reffig{pic-homotopy}(a). We can construct the lifting explicitly. First note that $S_n$ has an action on $GL(n)$ by column transformation, which is the lift of its action on $F_n$. We lift the path $g^{-1}(\tilde{\gamma})$ in $F_n$ to a path $\beta$ in $GL(n)$ starting at $e_0$. We can make it end at $e_{-1}=g^{-1}e_0$ by gradually changing the phases of each column vector along the path. Now the homotopy lifting is defined as follows [see \reffig{pic-homotopy}(b)]. For $t\in [0,1]$, scan the square in \reffig{pic-homotopy}(a) from bottom to up. For small $t$ (before touching the inner square), just lift the homotopy along $\beta$. Then one lifts the homotopy inside the inner square by $g^{-1}\circ$ the homotopy lifting of $\tilde{f_0}$. After one passes the inner square, one can just move $e_{-1}$ to $e_0$ by shrinking the line $\beta$. The homotopy class ($n$ integers) of the loops on fibers is invariant (For example, since we are only looking at the bundle over an open path $\tilde{\gamma}$, we can regard it as a trivial bundle). The final lifting is a loop on the fiber over $b_0$ with base point $e_0$. It is easy to see this loop corresponds to \eqref{eq-permutation}.

\bibliography{nH_classification}

\end{document}